\theoremstyle{plain}
    \newtheorem{theorem}{Theorem}
    \newtheorem{consequence}{Corollary}
\theoremstyle{remark}
    \newtheorem{remark}{Remark}
  \theoremstyle{definition}
    \newtheorem{definition}{Definition}
\newcommand{\R}{\mathbb{R}}
\newcommand{\T}{\mathrm{T}}
\newcommand{\so}{\mathfrak{so}}
\newcommand{\SO}{\mathrm{SO}}
\newcommand{\Orth}{\mathrm{O}}
\newcommand{\U}{\mathrm{U}}
\newcommand{\J}{\mathcal{J}}
\title{A note on relative equilibria of multidimensional rigid body}
\author{Anton Izosimov}
\begin{document}
\maketitle
\abstract{It is well known that a rotation of a free generic three-dimensional rigid body is stationary if and only if it is a rotation around one of three principal axes of inertia.
As it was noted by many authors, the analogous result is true for a multidimensional body: a rotation is stationary if and only if it is a rotation in the principal axes of inertia, provided that the eigenvalues of the angular velocity matrix are pairwise distinct. 
However, if some eigenvalues of the angular velocity matrix of a stationary rotation coincide, then it is possible that this rotation has a different nature. A description of such rotations is given in the present paper.}

\section{Introduction}
Speaking informally, a free multidimensional rigid body is simply a rigid body rotating in multidimensional space without action of any external forces (i.e., by inertia).\par
Let us first discuss a three-dimensional free rigid body (the so-called Euler case in the rigid body dynamics). A good model for such a body is a book or a parallelepiped shaped box.\par
Throw the book in the air spinning it in arbitrary direction. If we neglect the gravity force, then what we get is exactly the Euler case. \par
Note that  a general trajectory of a body is not a rotation in the usual sense. At each moment of time our body is indeed rotating around some axis, but this axis is changing as time goes. What we are interested in, are the relative equilibria of the system, i.e. such trajectories for which the axis of rotation remains fixed. Such rotations are also called stationary.\par
It is well known that a generic three-dimensional rigid body (i.e. a body with pairwise distinct principal moments of inertia) admits three and only three stationary rotations: these are the rotations around three principal axes of inertia. If we deal with a parallelepiped shaped body, then these axes coincide with the axes of symmetry.\par
What we want to do, is to generalize this result to the case of a multidimensional body. \par
The equations of a free multidimensional rigid body were first written by F.Frahm \cite{Frahm}. V.Arnold \cite{Arnold} wrote these equations in the form of Euler equations on $\so(n)^{*}$ and generalized them to the case of an arbitrary Lie algebra. A possibility to generalize Euler equations  to the multidimensional case was also mentioned by H.Weyl \cite{Weyl}:
\begin{quote}``The above treatment of the problem of rotation may, in contradistinction to the usual method, be transposed, word for word, from three-dimensional space to multidimensional spaces. This is, indeed, irrelevant in practice. On the other hand, the fact that we have freed ourselves from the limitation to a definite dimensional number and that we have formulated physical laws in such a way that the dimensional number appears \textbf{accidental} in them, gives us an assurance that we have succeeded fully in grasping them mathematically.''\end{quote}.\par 
The equations of a free multidimensional rigid body are famous for being a completely integrable system. As it was shown by S.Manakov \cite{Manakov}, the system admits an $L-A$ pair with a spectral parameter. This allowed him to write down integrals and to show that the system is integrable in Riemann $\theta$-functions. Complete integrability in the Liouville sense was proved by A.Fomenko and A.Mishchenko \cite{MF2, MF} and by T.Ratiu \cite{Manakov2}.
\section{Rotation of an $n$-dimensional body}
First we shall discuss how an $n$-dimensional body may rotate. At each moment of time, $\R^n$ is decomposed into the sum of $m$ pairwise orthogonal two-dimensional planes $\Pi_1, \dots, \Pi_m$ and an $n-2m$-dimensional space $\Pi_0$ orthogonal to all these planes:
$$
\R^n = \left(\bigoplus_{i=1}^{m} \Pi_i \right)\oplus \Pi_0.
$$
 There is an independent rotation in each of the planes $\Pi_1, \dots, \Pi_m$, while $\Pi_0$ is fixed. This is just a reformulation of the theorem about canonical form of a skew-symmetric operator. Note that $\Pi_0$ may be zero in the even-dimensional case, which means that there are no fixed axes.\par
A rotation is stationary if all the planes $\Pi_0, \dots, \Pi_m$ don't change with time (this condition automatically implies that the velocities of rotations are also constant).\par
A natural generalization of the three-dimensional result would be the following: A rotation of a generic multidimensional rigid body is stationary if and only if the planes $\Pi_0, \dots, \Pi_m$ are spanned by the principal axes of inertia.
 This is true provided that the angular velocities of rotations in planes  $\Pi_1, \dots, \Pi_m$ are pairwise distinct (see \cite{Marshall, Spiegler, Ratiu}), however not true in general. The existence of relative equilibria for which $\Pi_0, \dots, \Pi_m$ are not spanned by the principal axes of inertia was probably first noted in \cite{Ratiu}. In this note we give a complete description of such equilibria. Although this result is very simple, we could't find it in the literature.
 
 \section{The equations}
 The motion of a free multidimensional rigid body is described by the Euler-Arnold (or Euler-Frahm) equations on $\so(n)^{*}$ (identified with $\so(n)$). These equations have the form
 \begin{align}\label{EulerEq}
 \begin{cases}
 \dot M &= [M, \Omega]\\
 M &= \Omega J + J \Omega,
 \end{cases}
 \end{align}
 where 
 \begin{itemize}
\item  $M \in \so(n)^{*}$ is a skew-symmetric matrix, called the angular momentum matrix;
\item  $J$ is a symmetric matrix (see Remark \ref{JRemark});
\item $\Omega$ is a skew-symmetric matrix, called the angular velocity matrix. It is uniquely defined by the relation
$$
M = \Omega J + J \Omega.
$$
 \end{itemize}
 \begin{remark}
Since the map $\J \colon \so(n) \to \so(n)$ given by the formula
 $$
\J(\Omega) = \Omega J + J \Omega
 $$ 
 is invertible, our equations can be rewritten in the $\Omega$-coordinates:
 $$
 	\dot \Omega = \J^{-1}([\J(\Omega), \Omega]).
 $$
 However, the explicit formula for $\J^{-1}$ is complicated, therefore it is convenient to introduce the variable $M$ and write down the equations in the form (\ref{EulerEq}).
 \end{remark}
 \begin{remark}\label{JRemark}
 	In the multidimensional case $J$ is sometimes referred to as the ``inertia tensor'', which seems to be not very precise, because in the three-dimensional case the inertia tensor is not $J$, but the map
	$\J\colon \so(3) \to \so(3)$ given by the formula
 $$
\J(\Omega) = \Omega J + J \Omega.
 $$ 
 These two tensors (in the three-dimensional case) have common eigenvectors, but different eigenvalues: the eigenvalues of $\J$ are pairwise sums of the eigenvalues of $J$. 
  \end{remark}
 \begin{remark}
Note that equations (\ref{EulerEq}) describe only the dynamics of the angular velocity matrix. If we want to recover the dynamics in the whole phase space $\T^*\SO(n)$, we should add Poisson equations
$$
	\dot X = X\Omega, \mbox{ where } X \in \SO(n).
$$
However, we will only be interested in reduced dynamics, given by (\ref{EulerEq}). Note that relative equilibria of a rigid body is nothing else but the equilibrium points of (\ref{EulerEq}).
\end{remark}

\section{Description of relative equilibria}\label{dREq}
 \begin{theorem}
Consider the system of Euler-Arnold equations
  \begin{align*}
 \begin{cases}
 \dot M &= [M, \Omega]\\
 M &= \Omega J + J \Omega.
 \end{cases}
 \end{align*}
 Suppose that $J$ has pairwise distinct eigenvalues. Then $M$ is an equilibrium point of the system if and only if there exists an orthonormal basis such that $J$ is diagonal, and $\Omega$ is block-diagonal of the following form
\begin{align}\label{omegaForm}
\Omega =  \left(\begin{array}{cccccc}\omega_1A_1 &  & & & & \\ & \ddots & & & & \\ &  & \omega_{k}A_k & & & \\  & & & 0 & & \\ & & & & \ddots & \\ & & & & & 0 \end{array}\right),
\end{align}
 where $A_{i} \in \so(2m_i) \cap \SO(2m_i)$ for some $m_i > 0$, and $\omega_i$'s are distinct positive real numbers.\par
 Form (\ref{omegaForm}) is unique up to a permutation of blocks.
 \end{theorem}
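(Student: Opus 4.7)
The plan is to rewrite the equilibrium condition $[M,\Omega]=0$ purely in terms of $J$ and $\Omega$, and then exploit the hypothesis that $J$ has simple spectrum to pin down the structure of $\Omega$. The first step — and the single nontrivial calculation in the argument — is to substitute $M=\Omega J+J\Omega$ and expand: the cross terms $\Omega J\Omega$ cancel, yielding the identity $[M,\Omega]=[J,\Omega^2]$. Thus $M$ is a relative equilibrium if and only if $J$ commutes with $\Omega^2$.

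Since $J$ is symmetric and $\Omega^2$ is symmetric (because $\Omega$ is skew-symmetric), two commuting symmetric matrices can be simultaneously diagonalized in an orthonormal basis. Because the eigenvalues of $J$ are pairwise distinct, such an orthonormal eigenbasis of $J$ is unique up to permutation and sign, so in any such basis $\Omega^2$ is automatically diagonal. I would fix once and for all an orthonormal basis in which $J=\operatorname{diag}(j_1,\dots,j_n)$, and in this basis $\Omega^2$ is diagonal as well.

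Next I would reconstruct $\Omega$ from $\Omega^2$. The eigenvalues of $\Omega^2$ are non-positive, so they take the form $-\omega_1^2,\dots,-\omega_k^2$ (distinct, $\omega_i>0$) and possibly $0$; let $V_0,V_1,\dots,V_k$ be the corresponding eigenspaces, each spanned by a subset of the $J$-eigenvectors. Since $\Omega$ commutes with $\Omega^2$, it preserves every $V_i$. On $V_0$, the restriction of $\Omega$ is skew-symmetric with square zero, hence vanishes. On $V_i$ with $i\geq 1$, the rescaled operator $A_i:=\omega_i^{-1}\,\Omega|_{V_i}$ is skew-symmetric with $A_i^2=-I$; consequently $A_i$ is orthogonal, $V_i$ has even dimension $2m_i$ (a real skew-symmetric invertible matrix has even size), and since the determinant of a real skew-symmetric matrix is the square of the Pfaffian, $A_i\in\SO(2m_i)\cap\so(2m_i)$. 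Reordering the basis so that the members of each $V_i$ appear consecutively produces the block form (\ref{omegaForm}).

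The converse is immediate: if $\Omega$ has this block form in an orthonormal basis where $J$ is diagonal, then $\Omega^2$ is diagonal and therefore commutes with $J$, so $[M,\Omega]=0$. Uniqueness up to permutation of blocks follows because the $V_i$'s are intrinsically the eigenspaces of $\Omega^2$, so the scalars $\omega_i$ and block sizes $m_i$ are determined by $M$ itself. The main potential obstacle is really just the first step — noticing that $[M,\Omega]$ collapses cleanly to $[J,\Omega^2]$; once this reduction is in hand, the rest is a direct application of the spectral theorem for commuting symmetric matrices together with the canonical form of skew-symmetric operators.
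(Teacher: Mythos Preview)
Your argument is correct and follows essentially the same route as the paper: both reduce the equilibrium condition to $[J,\Omega^{2}]=0$ via the identity $[M,\Omega]=[J,\Omega^{2}]$, use simplicity of the spectrum of $J$ to force $\Omega^{2}$ to be diagonal in the $J$-eigenbasis, and then recover the block structure of $\Omega$ from $[\Omega,\Omega^{2}]=0$ together with $A_{i}^{2}=-E$. Your Pfaffian remark makes explicit why $\det A_{i}=+1$, a point the paper passes over quickly.
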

 \begin{proof}
  	We have $
	[M, \Omega] = [ \Omega J + J \Omega, \Omega] = [J, \Omega^{2}]
	$,
	therefore $M$ is an equilibrium if and only if $\Omega^{2}$ commutes with $J$.\par
	Assume we have a basis such that $J$ is diagonal and $\Omega$ has the form (\ref{omegaForm}).	 Then 
	$$
		A_i^2 = -A_iA_i^{\mathrm{t}} = -E
	$$
	and
		 \begin{align}\label{OmegaSq}
	\Omega^{2} =  \left(\begin{array}{cccccc}-\omega_1^2E &  & & & & \\ & \ddots & & & & \\ &  & -\omega_{k}^2E & & & \\  & & & 0 & & \\ & & & & \ddots & \\ & & & & & 0 \end{array}\right).
	\end{align}
	Therefore, $[\Omega^{2}, J] = 0$, and our point is an equilibrium point.\par
	Vice versa, let  $[\Omega^{2}, J] = 0$. We shall prove that there exists an orthonormal basis such that $J$ is diagonal and $\Omega$ has the form (\ref{omegaForm}).\par
	First find an orthonormal basis such that $J$ is diagonal. $\Omega^{2}$ is diagonal in this basis as well, since $J$ has pairwise distinct eigenvalues. Also note that the diagonal entries of $\Omega^2$ in this basis are non-positive, because $\Omega$ is skew-symmetric and has only pure imaginary or zero eigenvalues.	 Now, by a permutation of basis vectors, we can bring $\Omega^{2}$ to the form (\ref{OmegaSq}) where all $\omega_{i}$'s are positive and pairwise distinct.\par
	Since $\Omega^2$ is in the form (\ref{OmegaSq}) and $[\Omega^2,\Omega] = 0$, $\Omega$ has the form
	\begin{align*}
		\Omega = \left(\begin{array}{ccc}B_{1} &  &  \\ & \ddots &  \\ &  & B_{k+1}\end{array}\right),
	\end{align*}	  
	 where $B_{i} ^{2} = -\omega_{i}^{2}E$ for $i \leq k$, and $B_{k+1}^2 = 0$. \par
	 Since $B_{k+1}^2 = 0$ and $B_{k+1}$ is skew-symmetric, we have $B_{k+1} = 0$.
	 For $i \leq k$ set
	 $$
	 A_{i} = \frac{1}{\omega_{i}}B_{i}.
	 $$ 
	 On the one hand, $A_{i} \in \so(l_i)$ for some $l_i$. On the other hand
	 $$
	 A_{i}A_{i}^{\mathrm{t}} = -A_{i}^{2} =  -\frac{1}{\omega_{i}^{2}}B_{i}^{2} = E,
	 $$
	 which means that $A_{i} \in \SO(l_i)$. But $\so(l_i) \cap \SO(l_i)$ is empty for odd $l_i$, therefore $l_i = 2m_i$ and $\Omega$ has the form (\ref{omegaForm}), q.e.d.
 \end{proof}
 \begin{remark}
 	Note that $ \so(2m) \cap \SO(2m)$ is the homogeneous space $\Orth(2m) / \U(m)$ which is identified with the space of  complex structures compatible with the standard euclidian metrics.
 \end{remark}
 \begin{consequence}
 	A relative equilibrium of Euler-Arnold equations is defined by:
	\begin{enumerate}\item Choosing a decomposition
	$$
\R^n = \left(\bigoplus_{i=1}^{k} \Pi_i \right)\oplus \Pi_0,
$$
where all $\Pi_i$'s are spanned by the main axes of inertia and all $\Pi_i$'s for $i>0$ are even-dimensional.
\item Assigning an angular velocity $\omega_i > 0$ and a complex structure compatible with the Euclidian metrics to each $\Pi_i$ for $i > 0$.
\end{enumerate}
 \end{consequence}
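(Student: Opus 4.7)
The plan is to read the Corollary directly off the Theorem, translating the matrix normal form (\ref{omegaForm}) back into intrinsic geometric data.

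Starting from an equilibrium $M$, I would invoke the Theorem to obtain an orthonormal basis $e_1,\dots,e_n$ in which $J$ is diagonal and $\Omega$ takes the block form (\ref{omegaForm}). Setting $\Pi_i$, for $1 \leq i \leq k$, to be the span of the basis vectors indexed by the block $\omega_i A_i$, and $\Pi_0$ to be the span of the vectors of the trailing zero block, I obtain the decomposition required in point 1. Since $J$ is diagonal in the chosen basis, each $\Pi_i$ is a direct sum of eigenlines of $J$, hence spanned by principal axes of inertia, and the Theorem ensures that $\dim \Pi_i = 2m_i$ is even for $i \geq 1$. The restriction of $\Omega$ to $\Pi_i$ is $\omega_i A_i$, so invoking the remark that identifies $\so(2m_i) \cap \SO(2m_i)$ with $\Orth(2m_i)/\U(m_i)$, the space of compatible complex structures, I read off the angular velocity $\omega_i > 0$ and a compatible complex structure on $\Pi_i$ as required by point 2.

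For the converse direction, given the geometric data in 1 and 2, I would pick, for each $i$, an orthonormal basis of $\Pi_i$ consisting of principal axes of inertia (possible by the hypothesis in 1), concatenate these into an orthonormal basis of $\R^n$, and assemble the prescribed complex structures and angular velocities into a matrix $\Omega$ of the block form (\ref{omegaForm}) in this basis. With $J$ simultaneously diagonal in the same basis, the \emph{if} direction of the Theorem immediately yields that $M = \Omega J + J \Omega$ is an equilibrium.

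The only point requiring care is well-definedness: the unordered collection $\{(\Pi_i, \omega_i, A_i)\}$ must depend only on $M$, not on the particular basis produced by the Theorem. The decomposition and the angular velocities are pinned down by the uniqueness clause ``up to a permutation of blocks'' in the Theorem. For the complex structures, one observes that a different choice of orthonormal basis within a fixed $\Pi_i$ conjugates the matrix $A_i$ by an element of $\Orth(2m_i)$ but leaves the underlying endomorphism of $\Pi_i$ unchanged, so the complex structure is intrinsic to $\Pi_i$. I expect this bookkeeping to be the only real subtlety; everything else is a direct unpacking of the Theorem.
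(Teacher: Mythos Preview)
Your proposal is correct and is precisely the intended reading: the paper gives no separate proof of this Corollary, treating it as an immediate geometric restatement of the Theorem together with the preceding Remark identifying $\so(2m)\cap\SO(2m)$ with the space of compatible complex structures. Your additional care about well-definedness is a welcome bonus that the paper itself omits.
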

 \begin{consequence}[Well-known, see \cite{Marshall, Spiegler, Ratiu}]\label{regEquilibria}
 	Suppose that $M$ is a relative equilibrium, and all eigenvalues of $J$ are pairwise distinct. Moreover, let all non-zero eigenvalues of $\Omega$ be pairwise distinct. Then there exists an orthonormal basis such that $J$ is diagonal, while $\Omega$ and $M$ are block-diagonal with two-by-two blocks on the diagonal. \par
	In other words, a stationary rotation with pairwise distinct eigenfrequencies is a rotation in the principal axes of inertia.
 \end{consequence}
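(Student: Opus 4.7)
The plan is to deduce this directly from the preceding theorem, observing that the hypothesis on the eigenvalues of $\Omega$ forces each block $\omega_i A_i$ in the canonical form (\ref{omegaForm}) to be $2\times 2$.

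First, I would apply the theorem to obtain an orthonormal basis in which $J$ is diagonal and $\Omega$ has the block-diagonal form (\ref{omegaForm}), with blocks $\omega_i A_i$ where $A_i \in \so(2m_i) \cap \SO(2m_i)$ and the $\omega_i$ are distinct positive numbers. Since $A_i^2 = -E$ and $A_i$ is real, the eigenvalues of $A_i$ are $\pm i$, each with multiplicity $m_i$. Consequently, the block $\omega_i A_i$ contributes the eigenvalues $\pm i\omega_i$ to $\Omega$, each with multiplicity $m_i$.

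Now the assumption that all non-zero eigenvalues of $\Omega$ are pairwise distinct forces $m_i = 1$ for every $i$: otherwise the eigenvalue $i\omega_i$ would appear with multiplicity $m_i \geq 2$. Hence every block $\omega_i A_i$ has size $2 \times 2$, and $\Omega$ is block-diagonal with $2\times 2$ blocks on the non-zero part and zeros on $\Pi_0$.

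It remains to observe that $M$ inherits the same block structure. Since $J$ is diagonal in the chosen basis, it preserves any decomposition into coordinate subspaces; in particular it commutes with the block-decomposition defined by $\Omega$. Therefore $M = \Omega J + J\Omega$ is block-diagonal with $2\times 2$ blocks where $\Omega$ has its blocks, and zero on $\Pi_0$. This gives the desired simultaneous normal form. The argument is essentially bookkeeping; there is no real obstacle beyond extracting the multiplicity information from the equation $A_i^2 = -E$.
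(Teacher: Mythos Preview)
Your argument is correct and is exactly the intended deduction: the paper states this as an immediate corollary of the theorem without further proof, and the only thing to observe is that distinct non-zero eigenvalues of $\Omega$ force each $m_i=1$, which you spell out cleanly. Your additional remark that $M=\Omega J + J\Omega$ inherits the $2\times 2$ block structure because $J$ is diagonal is the obvious verification the paper leaves implicit.
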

 \begin{remark}
 Sometimes the result of Corollary \ref{regEquilibria} is formulated in the following way: if a relative equilibrium belongs to a regular (co)adjoint orbit, then it is a rotation in the principal axes of inertia. This is not very precise, because ``belongs to a regular (co)adjoint orbit'' means that $M$ is regular. However regularity of $M$ doesn't imply regularity of $\Omega$, and vice versa.
    \end{remark}
    \newpage
   We suggest the following
 \begin{definition}
 	We will say that an equilibrium $M$ is \textit{regular} if there exists an orthonormal basis such that $J$ is diagonal and $\Omega$ is block-diagonal with two-by-two blocks on the diagonal (i.e. this equilibrium is a rotation in the principal axes of inertia). Otherwise, we will say that $M$ is \textit{exotic}.
 \end{definition}
Corollary \ref{regEquilibria} says that all stationary rotations with pairwise distinct eigenfrequencies are regular.\par
In \cite{biham} A.Bolsinov and A.Oshemkov study those equilibria of system (\ref{EulerEq}) which are equilibrium points simultaneously for all the integrals of the system. It is proven that the set of such equilibria coincides with the set of regular equilibria in our terminology. Consequently, for each exotic equilibrium $M$ we can find an integral $f$ such that the hamiltonian vector field $v$ generated by $f$ doesn't vanish at $M$. All the points belonging to the trajectory of $v$ passing through $M$ will be equilibrium points of Ê(\ref{EulerEq}).
Consequently, exotic equilibria are not isolated on the coadjoint orbits of $\so(n)$, but form smooth submanifolds of equilibrium points (while regular equilibria are, on the contrary, always isolated on a given orbit). This can be used to prove that exotic equilibria are always Lyapunov unstable.
\begin{remark}
The problem of stability of relative equilibria of a multidimensional rigid body was studied by many authors. See \cite{Marshall, Ratiu, Casu2, Casu, Spiegler}.
\end{remark}
\bibliographystyle{unsrt}  
\bibliography{Diss} 

\end{document}